\begin{document}
%

\title{Relationship between circuit complexity and symmetry
}

\numberofauthors{1}
\author{
\alignauthor
Satoshi Tazawa\titlenote{Email adress is tazawa314[at mark]gmail.com}
       \affaddr{}\\
}

\maketitle
\begin{abstract}
It is already shown that a Boolean function for a NP-complete problem can be computed by a polynomial-sized circuit 
if its variables have enough number of automorphisms. Looking at this previous study from the different perspective gives us 
the idea that the small number of automorphisms might be a barrier for a polynomial time solution for NP-complete problems.

Here I show that by interpreting a Boolean circuit as a graph, the small number of graph automorphisms and 
the large number of subgraph automorphisms in the circuit establishes the exponential circuit lower bound for NP-complete problems.
As this strategy violates the largeness condition in Natural proof, this result shows that $P \neq NP$ 
without any contradictions to the existence of pseudorandom functions.
\end{abstract}

\category{F.2}{Theory of Computation}{Analysis of algorithms and problem complexity}

\terms{Theory}
\keywords{Circuit complexity, graph automorphism, $P \neq NP$}

\section{Preliminary and outline of this paper}
In this paper, unbounded depth Boolean circuits with statdard gates AND($\wedge$), OR($\vee$), NOT($\neg$) are discussed. 
AND gates and OR gates have 2 fan-in and unbounded fan-out. In section 3, a NP-complete Boolean function $f_{k}$ for the $k$-clique problem 
with $n$ vertices is discussed. $f_{k}$ can be written as two different ways,
\begin{equation} f_{k}(e_{12},e_{13},\ldots,e_{(n-1)n})=f_{k}(x_{1},x_{2},x_{3},\ldots,x_{\binom{n}{2}}) \end{equation}
(The former emphasizes the variables in $f_{k}$ as edges, the latter emphasizes the number of variables in $f_{k}$, and both expresssions are used.)\\
Section 3.1 shows that the proof strategy in Section3 is non-Naturalizable.\\
In section 4, the relationship to other open problems in computational complexity theory is discussed.

\section{Introduction}
Many approaches have been proposed to solve the famous $P \neq NP$ problem\cite{cook:npcomplete}\cite{karp}\cite{levin}.
Among them, circuit complexity has been studied in order to separate the complexity classes. 
As the exponential circuit lower bound for some NP-complete problem means $P \neq NP$, much effort is devoted to show such lower bound.
Although no exponential circuit lower bound for NP-complete problems is known for general circuit, the exponential circuit lower bound 
for problems in NP is obtained with restrictions on its depth, kinds of gates available, and so on
\cite{Raz1}\cite{Raz2}\cite{Raz3}\cite{Andreev1}\cite{Andreev3}\cite{Hajnal}\cite{Andreev2}\cite{Yao}\cite{Barrington}\cite{Aspnes}\cite{Hastad2}
\cite{Hastad}\cite{depthAjtai}\cite{depth2FSS}\cite{Karchmer}\cite{Paterson}\cite{Alon}\cite{Nisan}\cite{RRaz}\cite{Smolensky}\cite{Tardos}.
However such attempts cannot be extended to general circuits, and the reason why these attempts fail in general ciruits is discussed 
in Natural proof\cite{Natural}. Natural proof showed that proof strategies which are natural or naturalizable can not succeed in establishing 
the exponential lower bound for NP-complete problems under the assumption that there exist the pseudorandom functions.
As it is widely believed that pseudorandom functions exist, a promising approach needs to be nonnaturalizable(In other words, it needs to 
violate one of the conditions(constructivity, largeness, usefulness) in Natural proof).

Apart from this, it is already shown that a Boolean function for a NP-complete problem 
can be computed by a polynomial-sized circuit if its variables have enough number of automorphisms\cite{Evangelos} and 
many difficult SAT instances do not have symmetries\cite{cook:sat}. 
Looking at this previous study from the different perspective gives us the idea that the small number of automorphisms 
might be a barrier for a polynomial time solution for NP-complete problems.

Here I show that by interpreting a Boolean circuit as a graph, the small number of graph automorphisms (global symmetry) and 
the large number of subgraph automorphisms (local symmetry) in the circuit establishes the exponential circuit lower bound 
for NP-complete problems. As this strategy violates the largeness condition in Natural proof, this result shows that $P \neq NP$ 
without any contradictions to the existence of pseudorandom functions.

\section{Proof}
Before going into the discussion of the Boolean circuit of a NP-complete problem, it is necessary to explain the detailed outline of the proof.
In order to show the exponential circuit lower bound, it is necessary to derive an idea from the following well-known fact.\\
``A $C^{\infty}$ function $f(x)$ can be written as an infinite series
\begin{equation}\label{fx} f(x)=\sum_{n=0}^{\infty}a_{n}x^{n} \end{equation}
If constraints on $f(x)$ are given, for example $f^{(n)}(0)=1$, we can specify the form of (\ref{fx}) 
as $f^{(n)}(0)=n!a_{n}=1 \Leftrightarrow a_{n}=\frac{1}{n!}$. As a result, (\ref{fx}) can be written as
\begin{equation} f(x)=\sum_{n=0}^{\infty} \frac{1}{n!}x^{n} \end{equation}
This function $f(x)$ has a simpler form $f(x)=e^{x}$.''\\
Similary, any Boolean function $f(x_{1},x_{2},\ldots,x_{n})$ can be written in the disjunctive normal form.
\begin{equation}\label{Boolean} f(x_{1},x_{2},\ldots,x_{n})=\bigvee_{j=1}^{m} C_{j} \end{equation}
If constraints on $f(x_{1},x_{2},\ldots,x_{n})$ are given, we can specify the form of $f(x_{1},x_{2},\ldots,x_{n})$. 
$f(x_{1},x_{2},\ldots,x_{n})$ may have a simpler form than expressed in the disjunctive normal form. 
In order to establish the exponential circuit lower bound for a Boolean function of a NP-complete problem, 
it is reasonable to specify the form of $f(x_{1},x_{2},\ldots,x_{n})$ before its size is meausured.
Of course, the size of the Boolean circuit should not be measured in the disjunctive normal form 
as conversion into the disjunctive normal form sometimes results in an exponential explosion in the formula. 
So it is necessary to determine the lower bound of the size of the circuits which are logically equivalent to 
$f(x_{1},x_{2},\ldots,x_{n})$ in the specified form (\ref{Boolean}).

In order to separate the $P/poly$ and $NP$, a Boolean function $f_{k}$ for the $k$-clique problem with $n$ vertices (NP-complete problem) 
is discussed\cite{karp}. A graph with $n$ vertices can be encoded in binary using $\binom{n}{2}$ bits (Each bit represents one of possible edges).
In order to specify the form of $f_{k}$, the symmetry of variables in the Boolean circuit needs to be examined by interpreting the Boolean circuit as a graph.
Formally an automorphism of $f_{k}=f_{k}(x_{1},x_{2},\ldots,x_{\binom{n}{2}})$ is defined as follows.
\begin{multline} For\ a\ permutaion\ \sigma\in S_{\binom{n}{2}},\sigma\in Aut(f_{k})\\
if\ f_{k}(x_{\sigma(1)},x_{\sigma(2)},\ldots,x_{\sigma{\binom{n}{2}}})=f_{k}(x_{1},x_{2},\ldots,x_{\binom{n}{2}}) \end{multline}
A Boolean function $f_{k}$ has no automorphism except trivial automorphisms caused by permutations 
of labels on vertices. That is, for any permutation $\sigma\in S_{n}$, it follows that 
\begin{equation} f_{k}(e_{12},e_{13},\ldots,e_{(n-1)n})=f_{k}(e_{\sigma(1)\sigma(2)},e_{\sigma(1)\sigma(3)},\ldots,e_{\sigma(n-1)\sigma(n)})\end{equation}
$f_{k}$ is relatively asymmetrical based on the fact 
\begin{equation}\frac{The\ number\ of\ autmorphisms\ of\ f_{k}}{The\ number\ of\ all\ possible\ permutations}=\frac{n!}{\binom{n}{2}!}\end{equation}
However this information is not enough to specify the form of $f_{k}$ as there exist many kinds of asymmetrical circuits.
So it is necessary to examine not only the global symmetry but also the local symmetry of the circuit as a graph. Unlike ordinary graphs, 
exchangeability of gates should be taken into considerations. 
Gates AND($\wedge$) and OR($\vee$) should be regarded as exchangeable gates when used in the forms
\begin{equation} a\wedge b(\Leftrightarrow b\wedge a),a\vee b(\Leftrightarrow b\vee a),
\neg a\wedge\neg b(\Leftrightarrow\neg b\wedge\neg a),\neg a\vee \neg b(\Leftrightarrow \neg b\vee \neg a).\end{equation}
A NOT gate($\neg$) works as an inexchangeable gate when used in the forms
\begin{equation} \neg a\wedge b (\not\Leftrightarrow \neg b\wedge a),\neg a \vee b(\not\Leftrightarrow \neg b\vee a) \end{equation} 
In this paper, in order to measure the local symmetry of variables $X$ in a Boolean circuit $f$, 
we express the Boolean function $f$ in the disjunctive normal form and examine the automorphisms of $X$ after applying false values {0} to the remaining variables.

For example, let $f$ denote a Boolean function
\begin{equation}f = (x_{1}\wedge x_{2}\wedge x_{3})\vee(x_{2}\wedge x_{4}\wedge\neg x_{5}) \end{equation}
By restricting variables,\\
(A) For $X=\{x_{1},x_{2},x_{3}\}$,

$f^{restricted}(X)=f^{restricted}(x_{1},x_{2},x_{3})$

$=f^{restricted}(x_{1},x_{2},x_{3},0,0)$

$=(x_{1}\wedge x_{2}\wedge x_{3})\vee(x_{2}\wedge 0 \wedge1)=(x_{1}\wedge x_{2}\wedge x_{3})$.

$Aut(f^{restricted}(x_{1},x_{2},x_{3}))\cong S_{3}$.\\
(B) For $X=\{x_{1},x_{2}\}$,

$f^{restricted}(X)=f^{restricted}(x_{1},x_{2})$

$=f^{restricted}(x_{1},x_{2},0,0,0)$

$=(x_{1}\wedge x_{2}\wedge 0)\vee(x_{2}\wedge 0\wedge 1)=0$.

$Aut(f^{restricted}(x_{1},x_{2}))\cong S_{2}$\\
(C) For $X=\{x_{2},x_{4},x_{5}\}$,

$f^{restricted}(X)=f^{restricted}(x_{2},x_{4},x_{5})$

$=f^{restricted}(0,x_{2},0,x_{4},x_{5})$

$=(0 \wedge x_{2}\wedge 0)\vee(x_{2}\wedge x_{4}\wedge \neg x_{5})=(x_{2}\wedge x_{4}\wedge \neg x_{5})$.

$Aut(f^{restricted}(x_{2},x_{4},x_{5}))\cong S_{2}(x_{2}\ and\ x_{4}\ are\ exchageable).$\\
(D) For $X=\{x_{1},x_{2},x_{3},x_{4}\}$,

$f^{restricted}(X)=f^{restricted}(x_{1},x_{2},x_{3},x_{4})$

$=f^{restricted}(x_{1},x_{2},x_{3},x_{4},0)$

$=(x_{1} \wedge x_{2}\wedge x_{3})\vee(x_{2}\wedge x_{4}\wedge 1)$

$=(x_{1} \wedge x_{2}\wedge x_{3})\vee(x_{2}\wedge x_{4})$

$Aut(f^{restricted}(x_{1},x_{2},x_{3},x_{4}))\cong S_{2}(x_{1}\ and\ x_{3}\ are\ exchageable)$\\

In order to measure the local symmetry of $f_{k}$, let $X_{k}=\{x_{1},x_{2},\ldots,x_{\binom{k}{2}}\}=\{ edges\ among\ a\ k-clique\}$.
As a graph with $n$ vertices has $\binom{n}{k}$ candidate $k$-cliques, 
$X_{k}$ is used as a representitive of $\binom{n}{k}$ candidate $k$-cliques $X_{k}^{1},X_{k}^{2},\ldots,X_{k}^{\binom{n}{k}}$.
For a Boolean function $f_{k}$ for the $k$-clique problem with $n$ vertices, the local symmetry of $f_{k}$ can be expressed as follows.
\newtheorem{theorem}{Theorem}
\begin{theorem} $Aut(f_{k}^{restricted}(X_{k}))\cong S_{\binom{k}{2}}$.\end{theorem}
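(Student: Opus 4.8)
The plan is to exhibit the restricted function explicitly and then simply read off its automorphism group. First I would write $f_k$ in its natural disjunctive normal form: a graph on the vertex set $[n]$ contains a $k$-clique if and only if there is a $k$-element vertex set all of whose internal edges are present, so
\begin{equation}
f_k = \bigvee_{S \in \binom{[n]}{k}} \ \bigwedge_{\{i,j\} \subseteq S} e_{ij},
\end{equation}
a monotone DNF with exactly one clause $C_S$ per candidate $k$-clique $S$ --- these are precisely the $\binom{n}{k}$ clauses associated with $X_k^1, X_k^2, \ldots, X_k^{\binom{n}{k}}$ introduced above. Let $T$ denote the vertex set of the fixed clique, so that $X_k = \{\, e_{ij} : \{i,j\} \subseteq T \,\}$.

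Next I would carry out the restriction prescribed in the definition of $f_k^{restricted}$, namely substitute the value $0$ for every edge variable not in $X_k$. The key point is that every clause other than $C_T$ dies. Indeed, if $S \ne T$ with $|S| = |T| = k$, then $S$ contains some vertex $v \notin T$; since $k \ge 2$ there is a further vertex $w \in S \setminus \{v\}$, and the edge $e_{vw}$ is incident to $v \notin T$, hence $e_{vw} \notin X_k$, hence $e_{vw}$ is set to $0$ and the conjunction $C_S$ collapses to $0$. Only $C_T$ survives; since every edge of $T$ lies in $X_k$, relabelling the $\binom{k}{2}$ edges of $X_k$ as $x_1, \ldots, x_{\binom{k}{2}}$ gives
\begin{equation}
f_k^{restricted}(X_k) = \bigwedge_{l=1}^{\binom{k}{2}} x_l .
\end{equation}

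Finally I would read off the automorphism group. The conjunction of all $\binom{k}{2}$ variables is invariant under every permutation of those variables --- AND is a symmetric operation and all of its inputs here are uncomplemented, so the gate-exchangeability conventions impose no restriction --- and conversely every permutation of the variables trivially preserves it; hence $Aut(f_k^{restricted}(X_k)) \cong S_{\binom{k}{2}}$, as claimed. The proof carries essentially no obstacle beyond the one combinatorial point that must be checked carefully: that for $k \ge 2$ every $k$-set $S \ne T$ genuinely contributes an edge lying outside $X_k$, which is exactly what forces all but one clause of the DNF to vanish under the restriction. The boundary case $k = 1$ is vacuous, since then there are no edge variables at all and $S_{\binom{1}{2}} = S_0$ is trivial.
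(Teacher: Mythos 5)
Your proof is correct, but it takes a genuinely different route from the paper's. You compute $f_{k}^{restricted}(X_{k})$ explicitly: starting from the canonical monotone DNF of the clique function, you observe that setting every variable outside $X_{k}$ to $0$ kills every clause except the one indexed by the vertex set $T$ of $X_{k}$, so the restricted function is literally $\bigwedge_{l=1}^{\binom{k}{2}} x_{l}$, whose automorphism group is visibly the full symmetric group; in effect you establish the paper's Theorem~5 first and obtain Theorem~1 as an immediate corollary. The paper goes in the opposite direction: it argues by contradiction about an \emph{arbitrary} DNF representation of $f_{k}^{restricted}(X_{k})$, invoking Theorem~2 (every clause must contain all of $X_{k}$) and Theorem~4 (a failed transposition forces a clause mixing a negated with an unnegated literal) to show that an asymmetric clause would accept an input that is not a $k$-clique, and only afterwards deduces the explicit conjunctive form in Theorem~5. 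Your argument is shorter and rests on the observation that both restriction (substituting $0$) and the automorphism condition $f(x_{\sigma(1)},\ldots)=f(x_{1},\ldots)$ are properties of the function rather than of any particular representation, so computing the restriction in the convenient canonical DNF is legitimate; what it does not supply is the clause-by-clause structural machinery of Theorems~2--4, which the paper reuses later when constraining arbitrary representations of the unrestricted $f_{k}$. The one combinatorial point your route requires --- that every $k$-set $S\neq T$ contributes an edge outside $X_{k}$ --- you check explicitly, and it holds for all $k\ge 2$, which covers the range $3<k<n^{1/4}$ the paper ultimately uses.
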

A proof of theorem1 is shown later. In order to prove theorem 1, it is necessary to understand the relationship between the symmetry
(or asymmetry) of variables and the structure of Boolean function. To reduce the the number of possibilities of structures of 
Boolean functions, the following theorem is useful.
\begin{theorem} For $f_{k}^{restricted}(X_{k})=C_{1}\vee C_{2}\vee\ldots\vee C_{m}$, each one of the clauses, $C_{i}(1\le i\le m)$, has to contain 
all of the variables in $X_{k}$.\end{theorem}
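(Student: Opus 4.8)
The plan is to first pin down the Boolean function $f_{k}^{restricted}(X_{k})$ explicitly, and then read off the clause condition from it. Recall that $f_{k}$ is the monotone $k$-clique function on $n$ vertices, so its minimal disjunctive normal form --- the natural ``specified form'' relevant here --- has exactly one clause for each $k$-subset $S$ of the $n$ vertices, namely the conjunction $\bigwedge_{\{u,v\}\subseteq S} e_{uv}$ of the $\binom{k}{2}$ edges inside $S$. Since $X_{k}$ consists of the edge variables among the fixed $k$-clique on $\{1,\dots,k\}$, the ``remaining variables'' are precisely the edge variables $e_{uv}$ with $\{u,v\}\not\subseteq\{1,\dots,k\}$, i.e. those incident to some vertex $>k$.

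First I would carry out the substitution of $0$ for every remaining variable. If $S\neq\{1,\dots,k\}$, then $S$ contains a vertex $w>k$, so the clause for $S$ contains $e_{wv}$ for any other $v\in S$; this is a remaining variable, so the whole clause collapses to $0$. Only the clause for $S=\{1,\dots,k\}$ survives, and the substitution leaves it untouched, so $f_{k}^{restricted}(X_{k})=\bigwedge_{x\in X_{k}}x$, the conjunction of all $\binom{k}{2}$ variables of $X_{k}$. (In graph language: deleting every edge not inside $\{1,\dots,k\}$ isolates every vertex $>k$, so the only $k$-set that can still carry a clique is $\{1,\dots,k\}$ itself, and it does so exactly when all $\binom{k}{2}$ of its edges are present.)

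Next I would deduce the clause condition for an arbitrary disjunctive normal form $f_{k}^{restricted}(X_{k})=C_{1}\vee\dots\vee C_{m}$. Each $C_{i}$ is an implicant of $\bigwedge_{x\in X_{k}}x$, so every assignment satisfying $C_{i}$ sets all of $X_{k}$ to $1$. If some $x_{j}\in X_{k}$ did not occur in $C_{i}$, one could satisfy $C_{i}$ while setting $x_{j}=0$, a contradiction; and if $x_{j}$ occurred negatively, $C_{i}$ would force $x_{j}=0$, contradicting consistency of $C_{i}$. Hence every $C_{i}$ contains the (positive) literal of every variable of $X_{k}$, which is the claim; in fact this forces each $C_{i}$ to equal $\bigwedge_{x\in X_{k}}x$, so the only non-redundant such form is the single clause itself --- which is exactly what Theorem 1 will need.

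I do not expect a substantive obstacle here; the one point to handle carefully is the substitution step, where one must verify that every candidate $k$-clique $X_{k}^{j}\neq X_{k}$ really does lose an edge. This is immediate --- such an $X_{k}^{j}$ uses a vertex outside $\{1,\dots,k\}$, and every edge at that vertex lies outside $X_{k}$ and is set to $0$ --- but it is the step that actually uses the clique structure of $f_{k}$, and it is also where the exchangeability and $\neg$ conventions of the preceding paragraphs must be kept consistent with the phrase ``contains all of the variables.''
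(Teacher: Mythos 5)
Your proposal is correct and rests on the same core argument as the paper's proof: if a (satisfiable) clause omitted some variable of $X_{k}$, one could satisfy that clause while setting the omitted variable to $0$, forcing $f_{k}^{restricted}(X_{k})=1$ on a non-clique, while unsatisfiable clauses are dismissed as erasable. Your extra preliminary step of explicitly computing $f_{k}^{restricted}(X_{k})=\bigwedge_{x\in X_{k}}x$ and your observation that the variables must in fact occur positively go slightly beyond the stated theorem (the paper defers the latter to Theorems 3--5), but the route is essentially the same.
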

\begin{proof} A method of proof by contradition is used. If $C_{i}(1\le i\le m)$ contains only $l(<\binom{k}{2})$ variables, then two cases are conceivable.\\
(1) $C_{i}$ is satisfiable if the truth values of $l$ variables are appropriately chosen.\\
(2) $C_{i}$ is not satisfiable for any of the truth values.

In case (1), $C_{i}=1$ for $l$ variables with appropriately chosen truth values. So
\begin{multline}\label{alwaystrue} f_{k}^{restricted}(X_{k})=C_{1}\vee C_{2}\vee\ldots\vee C_{i}\vee\ldots\vee C_{m}=1 \end{multline}
But if the variable not used in $C_{i}$ takes $0$, $f_{k}^{restricted}(X_{k})$ should return $0$ as $X_{k}$ does not form a $k$-clique.
This contradicts with (\ref{alwaystrue}).

In case (2), as $C_{i}=0$\\
$f_{k}^{restricted}(X_{k})=C_{1}\vee C_{2}\vee\ldots\vee C_{i-1}\vee C_{i}\vee C_{i+1}\vee\ldots\vee C_{m}$\\
$=C_{1}\vee C_{2}\vee\ldots\vee C_{i-1}\vee 0\vee C_{i+1}\vee\ldots\vee C_{m}$\\
$=C_{1}\vee C_{2}\vee\ldots\vee C_{i-1}\vee C_{i+1}\vee\ldots\vee C_{m}$\\ 
$C_{i}$ does not influence the return value and should be erased. Therefore each one of $C_{i}$ has to contain all of the variables in $X_{k}$.
\end{proof}
By theorem 2, we just need to consider clauses, each one of which contains all of the variables in $X_{k}$. 
Regarding the symmetry of variables in a clause, the following theorem follows.
\begin{theorem} For a clause $C(X_{k})$ in which all of the variables are connected by $\wedge$, it follows that\\
$Aut(C(X_{k}))\cong S_{\binom{k}{2}}$\\
$\Leftrightarrow C(X_{k})=(x_{1}\wedge x_2\wedge\ldots\wedge x_{\binom{k}{2}})\ or\ (\neg x_{1}\wedge \neg x_2\wedge\ldots\wedge \neg x_{\binom{k}{2}})$
\end{theorem}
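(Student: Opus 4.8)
The plan is to prove the biconditional in Theorem~3 in two directions, treating the clause $C(X_k)$ as a conjunction of literals over the $\binom{k}{2}$ variables of $X_k$.

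The ($\Leftarrow$) direction is the easy one. If $C(X_k) = x_1 \wedge x_2 \wedge \cdots \wedge x_{\binom{k}{2}}$, then every variable appears positively and symmetrically, so any permutation $\sigma \in S_{\binom{k}{2}}$ merely reorders the conjuncts; by the exchangeability rules stated earlier for $\wedge$, the resulting clause is logically equivalent to the original, hence $\sigma \in \mathrm{Aut}(C(X_k))$. The same argument works for the all-negated clause $\neg x_1 \wedge \cdots \wedge \neg x_{\binom{k}{2}}$. So in both cases $\mathrm{Aut}(C(X_k)) \cong S_{\binom{k}{2}}$.

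For the ($\Rightarrow$) direction, I would argue contrapositively. Since $C(X_k)$ is a conjunction in which all variables of $X_k$ appear (this is the hypothesis of the theorem, consistent with Theorem~2), each variable $x_i$ appears either as the positive literal $x_i$ or as the negative literal $\neg x_i$. Suppose $C(X_k)$ is not one of the two stated forms; then the set of indices appearing positively and the set appearing negatively are both nonempty. Pick $x_i$ occurring positively and $x_j$ occurring negatively, and consider the transposition $\tau = (i\ j)$. Applying $\tau$ turns the sub-conjunction $x_i \wedge \neg x_j$ into $x_j \wedge \neg x_i$. I would then exhibit an explicit assignment — for instance $x_i = 1, x_j = 0$ and all other variables set to make the remaining literals of $C$ true — on which the original clause evaluates to $1$ but the permuted clause evaluates to $0$ (because it now requires $x_i = 0$). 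This shows $\tau \notin \mathrm{Aut}(C(X_k))$, so $\mathrm{Aut}(C(X_k))$ is a proper subgroup of $S_{\binom{k}{2}}$ and in particular not isomorphic to $S_{\binom{k}{2}}$ acting naturally. Contrapositively, $\mathrm{Aut}(C(X_k)) \cong S_{\binom{k}{2}}$ forces all literals to have the same sign, i.e. $C(X_k)$ has one of the two claimed forms.

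The main obstacle, and the point requiring the most care, is making precise what ``$\mathrm{Aut}(C(X_k)) \cong S_{\binom{k}{2}}$'' means and why an abstract group isomorphism is enough to conclude. The automorphism group is a subgroup of the symmetric group on $\binom{k}{2}$ points, and a proper subgroup can still be abstractly isomorphic to $S_{\binom{k}{2}}$ only in degenerate small cases; I would note that for $\binom{k}{2} \geq 2$ (equivalently $k \geq 2$) a subgroup of $S_{\binom{k}{2}}$ isomorphic to $S_{\binom{k}{2}}$ must be the whole group by a cardinality argument, so ``$\cong S_{\binom{k}{2}}$'' is equivalent to ``every permutation of the variables is an automorphism.'' With that clarification the transposition argument above goes through cleanly, and one should also double-check the handling of the NOT gate's inexchangeability, which is exactly what drives the sign-mismatch contradiction.
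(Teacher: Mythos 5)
Your proposal is correct and follows essentially the same route as the paper: the paper's proof also reduces to the observation that two variables are exchangeable in a conjunction if and only if their literals have the same sign, and concludes that full symmetry forces a uniform sign. You simply make the paper's ``simple observation'' rigorous by exhibiting an explicit falsifying assignment for a transposition of oppositely-signed literals, and you add a worthwhile clarification (which the paper omits) that a subgroup of $S_{\binom{k}{2}}$ abstractly isomorphic to $S_{\binom{k}{2}}$ must be the whole group, so ``$\cong S_{\binom{k}{2}}$'' really does mean every transposition is an automorphism.
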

\begin{proof} To show $\Leftarrow$ is trivial. So it is necessary to show $\Rightarrow$. If $Aut(C(X_{k}))\cong S_{\binom{k}{2}}$, then 
all of the variables in $X_{k}$ are exchageable. Based on the simply observation, $x_{i}$ and $x_{j}$ $(x_{i},x_{j}\in X_{k}, i\neq j)$ are exchageable 
in $C(X_{k})$ if and only if $x_{i}$ and $x_{j}$ take the forms $(x_{i}\wedge x_{j})$ or $(\neg x_{i}\wedge \neg x_{j})$. 
Therefore in order for all of the variables in $C(X_{k})$ to be exchangeable,\\
$C(X_{k})=(x_{1}\wedge x_{2}\wedge\ldots\wedge x_{\binom{k}{2}})\ or\ (\neg x_{1}\wedge \neg x_2\wedge\ldots\wedge \neg x_{\binom{k}{2}})$
\end{proof}
Regarding the asymmtry, many possibilities can be considered. So I discuss the case where one transposition does not follow.
\begin{theorem} For a clause $C(X_{k})$ in which all of the variables are connected by $\wedge$,\\ 
one transposition, say $(x_{1}\ x_{2})(x_{1},x_{2}\in X_{k})$, does not follow\\
$\Leftrightarrow C(X_{k})=(x_{1}\wedge \neg x_{2}\wedge\ remaining\ variables)\ or\ (\neg x_{1}\wedge x_{2}\wedge\ remaining\ variables)$
\end{theorem}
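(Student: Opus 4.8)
The plan is to reduce the statement to the very same elementary observation about literal polarities that was used in the proof of Theorem~3, now applied only to the pair $\{x_{1},x_{2}\}$. First I would fix the setup: since $C(X_{k})$ is a conjunction over all of $X_{k}$ (Theorem~2), in particular $x_{1}$ and $x_{2}$ each occur in $C(X_{k})$, and, following the convention of Theorem~3, each of them occurs exactly once, as a single literal (either positive or negative). Consequently there are exactly four possibilities for the literal pair attached to $\{x_{1},x_{2}\}$, namely $(x_{1}\wedge x_{2})$, $(\neg x_{1}\wedge\neg x_{2})$, $(x_{1}\wedge\neg x_{2})$, $(\neg x_{1}\wedge x_{2})$, all carrying the same ``remaining variables'' conjunction $R$ (where $R$ involves only variables in $X_{k}\setminus\{x_{1},x_{2}\}$, each once, hence $R$ is satisfiable).

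For the direction $\Leftarrow$, suppose $C(X_{k})=(x_{1}\wedge\neg x_{2}\wedge R)$. Applying the transposition $(x_{1}\ x_{2})$ to the variable indices (in the sense of the automorphism definition in Section~3) turns the literal $x_{1}$ into $x_{2}$, the literal $\neg x_{2}$ into $\neg x_{1}$, and leaves $R$ untouched, giving $(\neg x_{1}\wedge x_{2}\wedge R)$. Evaluating both at any assignment with all literals of $R$ true and $(x_{1},x_{2})=(1,0)$ yields $1$ for the original and $0$ for the image, so the two Boolean functions differ; hence $(x_{1}\ x_{2})\notin Aut(C(X_{k}))$, i.e.\ the transposition does not follow. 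The case $C(X_{k})=(\neg x_{1}\wedge x_{2}\wedge R)$ is symmetric.

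For $\Rightarrow$ I would argue by contraposition. If $C(X_{k})$ is of neither stated form, then by the four‑case enumeration above the literals on $x_{1}$ and $x_{2}$ have the same polarity, so $C(X_{k})$ contains $(x_{1}\wedge x_{2})$ or $(\neg x_{1}\wedge\neg x_{2})$ as a subexpression. By the observation already granted in the proof of Theorem~3 — two same‑polarity literals on $x_{i},x_{j}$ are invariant under exchanging $x_{i}$ and $x_{j}$ because $\wedge$ is commutative, while opposite‑polarity ones are not — the transposition $(x_{1}\ x_{2})$ fixes $C(X_{k})$, i.e.\ it does follow. Contrapositively, if $(x_{1}\ x_{2})$ does not follow then $C(X_{k})$ has one of the two stated forms, completing the equivalence.

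The argument is essentially a restriction of Theorem~3's ``exchangeable $\iff$ equal polarity'' principle to a single pair, so I do not anticipate a genuine obstacle; the only point requiring care is the bookkeeping at the start — invoking Theorem~2 to guarantee that both $x_{1}$ and $x_{2}$ really appear in $C(X_{k})$, and adopting the ``each variable appears exactly once as a literal'' convention so that no degenerate clause ($C(X_{k})\equiv 0$ or $\equiv 1$) can arise.
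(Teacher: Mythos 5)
Your proposal is correct and takes essentially the same route as the paper: both directions reduce to the observation that two literals joined by $\wedge$ are exchangeable if and only if they have the same polarity, with $\Leftarrow$ witnessed by a distinguishing assignment and $\Rightarrow$ obtained as the contrapositive of commutativity of $\wedge$. You are merely more explicit than the paper (which dismisses $\Leftarrow$ as trivial and states the polarity observation without the four-case enumeration), and your care in fixing the ``each variable occurs exactly once as a single literal'' convention and the satisfiability of the remaining conjunction $R$ is a reasonable tightening rather than a different argument.
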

\begin{proof} To show $\Leftarrow$ is trivial. So it is necessary to show $\Rightarrow$.
Based on the simply observation, $x_{1}$ and $x_{2}$ are inexchageable if and only if one of them is connected to a NOT gate. 
As all of the variables in a clause are connected by $\wedge$, 
$C(X_{k})=(x_{1}\wedge \neg x_{2}\wedge\ remaining\ variables)\ or\ (\neg x_{1} \wedge x_{2}\wedge\ remaining\ variables)$\\
\end{proof}
Using these results, theorem 1 is shown here.\\

Theorem 1. $Aut(f_{k}^{restricted}(X_{k}))\cong S_{\binom{k}{2}}$.\\
\begin{proof}
A method of proof by contradiction is used.\\
For $f_{k}^{restricted}(X_{k}))=C_{1}(X_{k})\vee C_{2}(X_{k})\vee\ldots\vee C_{m}(X_{k})$, suppose if $Aut(C_{i}(X_{k}))\not\cong S_{\binom{k}{2}}$, 
then one of the transpositions, say $(x_{1}\ x_{2})(x_{1},x_{2}\in X_{k})$, does not follow in $C_{i}(X_{k})$. 
To satisfy this inexchageability of $x_{1}$ and $x_{2}$, 
\begin{equation}\label{inex1} C_{i}(X_{k})=(\neg x_{1}\wedge x_{2}\bigwedge_{3\le z\le\binom{k}{2}}(\neg x_{z})\bigwedge_{3\le w\le\binom{k}{2}}x_{w})(z\neq w)\end{equation}
\begin{equation}\label{inex2} or\ C_{i}(X_{k})=(x_{1}\wedge\neg x_{2}\bigwedge_{3\le z\le\binom{k}{2}}(\neg x_{z})\bigwedge_{3\le w\le\binom{k}{2}}x_{w})(z\neq w)\end{equation}
However by assigning values $x_{1}=0,x_{2}=1,x_{z}=0,x_{w}=1$ to (\ref{inex1}), (\ref{inex1}) returns $1$ though $f_{k}^{restricted}(X_{k})$ should return 0. 
By assigning values $x_{1}=1,x_{2}=0,x_{z}=0,x_{w}=1$ to (\ref{inex2}), (\ref{inex2}) returns 1 though $f_{k}^{restricted}(X_{k})$ should return 0. 
Therefore $Aut(C_{i}(X_{k}))\cong S_{\binom{k}{2}}$ and $Aut(f_{k}^{restricted}(X_{k}))\cong Aut(C_{1}(X_{k})\vee C_{2}(X_{k})\vee\ldots\vee C_{m}(X_{k}))\cong S_{\binom{k}{2}}$.
\end{proof}

\begin{theorem} $f_{k}^{restricted}(X_{k})=(x_{1}\wedge x_{2}\wedge\ldots\wedge x_{\binom{k}{2}})$ \end{theorem}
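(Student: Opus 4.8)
The plan is to combine Theorems~1, 2 and~3 to reduce the disjunctive normal form of $f_k^{restricted}(X_k)$ to a short list of candidates, and then to eliminate all but one candidate by direct evaluation. First I would write $f_k^{restricted}(X_k) = C_1(X_k) \vee \cdots \vee C_m(X_k)$ in disjunctive normal form. By Theorem~2 every clause $C_i$ already contains all $\binom{k}{2}$ variables of $X_k$, so each $C_i$ is a conjunction with exactly one literal per variable. By Theorem~1 we have $Aut(f_k^{restricted}(X_k)) \cong S_{\binom{k}{2}}$; moreover the argument in its proof shows that no surviving clause can fail a transposition, since by Theorem~4 such a clause would admit a $0/1$ assignment on which it evaluates to $1$ while $f_k^{restricted}(X_k)$ must return $0$. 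Hence $Aut(C_i(X_k)) \cong S_{\binom{k}{2}}$ for every $i$, and Theorem~3 then forces $C_i(X_k) = (x_1 \wedge \cdots \wedge x_{\binom{k}{2}})$ or $C_i(X_k) = (\neg x_1 \wedge \cdots \wedge \neg x_{\binom{k}{2}})$.

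After deleting duplicate clauses, $f_k^{restricted}(X_k)$ is therefore one of exactly three functions: $(x_1 \wedge \cdots \wedge x_{\binom{k}{2}})$, or $(\neg x_1 \wedge \cdots \wedge \neg x_{\binom{k}{2}})$, or their disjunction. To finish I would evaluate at two inputs. Setting every variable in $X_k$ to $1$ makes the restricted graph equal to a $k$-clique on the representative vertices (all other edges are already $0$), so $f_k^{restricted}(X_k)$ must return $1$; this rules out $(\neg x_1 \wedge \cdots \wedge \neg x_{\binom{k}{2}})$. Setting every variable in $X_k$ to $0$ makes the restricted graph empty, which for $k \geq 2$ contains no $k$-clique, so $f_k^{restricted}(X_k)$ must return $0$; this rules out the disjunction (and again the all-negative clause). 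Only $(x_1 \wedge \cdots \wedge x_{\binom{k}{2}})$ survives, which is the claim.

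I expect the only delicate point to be the propagation step: deriving the per-clause symmetry $Aut(C_i(X_k)) \cong S_{\binom{k}{2}}$ from the global statement of Theorem~1, since a priori a disjunction can be symmetric without each disjunct being symmetric, so this cannot be read off from the automorphism group alone. What does the work is the semantic observation already embedded in the proof of Theorem~1 — any clause that is satisfiable at some input forces $f_k^{restricted}(X_k)$ to $1$ there. The same observation gives a direct alternative: because the restricted graph is a $k$-clique exactly when all of $X_k$ is $1$ and contains no $k$-clique otherwise, $f_k^{restricted}(X_k)$ is the Boolean function that is true only at the all-ones input, i.e. $x_1 \wedge \cdots \wedge x_{\binom{k}{2}}$; I would include this as a short remark confirming the structural derivation.
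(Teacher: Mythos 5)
Your proposal is correct and follows essentially the same route as the paper: Theorem~2 forces every clause to contain all of $X_k$, the proof of Theorem~1 supplies the per-clause symmetry $Aut(C_i(X_k))\cong S_{\binom{k}{2}}$, Theorem~3 narrows each clause to the all-positive or all-negative conjunction, and a semantic check eliminates the all-negative option. Your explicit evaluation at the all-ones and all-zeros inputs merely spells out the paper's one-line remark that only $(x_1\wedge\cdots\wedge x_{\binom{k}{2}})$ correctly recognizes a $k$-clique.
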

\begin{proof}
By theorem1, $f_{k}^{restricted}(X_{k})=C_{1}(X_{k})\vee C_{2}(X_{k})\vee\ldots\vee C_{m}(X_{k})$ and $Aut(C_{i}(X_{k}))\cong S_{\binom{k}{2}}(i=1,2,\ldots,m)$. 
By theorem3, a clause $C_{i}(X_{k})$ in which all of the variables are connected by $\wedge$, satisfies this condition on the automorphism 
if and only if $C_{i}(X_{k})=(x_{1}\wedge x_2\wedge\ldots\wedge x_{\binom{k}{2}})\ or\ (\neg x_{1}\wedge \neg x_{2}\wedge\ldots\wedge \neg x_{\binom{k}{2}})$ 
Among them, a Boolean function which correctly recognizes a $k$-clique is only $(x_{1}\wedge x_{2}\wedge\ldots\wedge x_{\binom{k}{2}})$.\\
Therefore $f_{k}^{restricted}(X_{k})=(x_{1}\wedge x_{2}\wedge\ldots\wedge x_{\binom{k}{2}})$.
\end{proof}

\begin{theorem} For a variable $y (\not\in X_{k})$ representing an edge, $Aut(f_{k}^{restricted}(X_{k},y))\not\cong S_{\binom{k}{2}+1}$.\end{theorem}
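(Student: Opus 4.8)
The plan is to prove something sharper than mere asymmetry: that in $f_{k}^{restricted}(X_{k},y)$ the extra variable $y$ is a \emph{dummy} variable, after which the automorphism group can simply be read off. Write $X_{k}$ for the set of edges among the vertices $\{1,\ldots,k\}$, and let $y\not\in X_{k}$ be the edge joining a vertex $v\not\in\{1,\ldots,k\}$ to some vertex $u$. In the restriction defining $f_{k}^{restricted}(X_{k},y)$ every edge outside $X_{k}\cup\{y\}$ is set to $0$, so in the resulting edge-graph the vertex $v$ is incident to at most one present edge, namely $y$. For $k\ge 3$ a vertex of degree at most $1$ cannot lie in a $k$-clique, so no $k$-clique can use $v$, hence none can use the edge $y$; the only candidate $k$-clique is the one on $\{1,\ldots,k\}$, and it is present exactly when every edge of $X_{k}$ is present.

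First I would use this to identify the function. Since $f_{k}^{restricted}(X_{k},y)$ is by definition the Boolean function that is $1$ iff the graph whose only possibly-present edges are $X_{k}\cup\{y\}$ contains a $k$-clique, and since such a $k$-clique must be the one on $\{1,\ldots,k\}$ and requires all of $X_{k}$, we obtain
\begin{equation} f_{k}^{restricted}(X_{k},y)=x_{1}\wedge x_{2}\wedge\ldots\wedge x_{\binom{k}{2}}, \end{equation}
a Boolean function on the $\binom{k}{2}+1$ variables $X_{k}\cup\{y\}$ that depends on every $x_{i}$ but not on $y$. This also follows from the disjunctive-normal-form reasoning of Theorems~2 and~5, the only new point being that a literal in $y$ in any clause would be redundant.

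Then the automorphism computation follows immediately from the definition. A permutation $\sigma\in S_{\binom{k}{2}+1}$ leaves this function unchanged if and only if it sends $\{x_{1},\ldots,x_{\binom{k}{2}}\}$ to itself, equivalently fixes $y$; hence $Aut(f_{k}^{restricted}(X_{k},y))\cong S_{\binom{k}{2}}$. In particular the transposition $(x_{1}\ y)$ is not an automorphism, because applying it produces $y\wedge x_{2}\wedge\ldots\wedge x_{\binom{k}{2}}$, which depends on $y$ and not on $x_{1}$ and hence differs from the original. Since $|S_{\binom{k}{2}}|=\binom{k}{2}!\neq(\binom{k}{2}+1)!=|S_{\binom{k}{2}+1}|$, the two groups cannot be isomorphic, so $Aut(f_{k}^{restricted}(X_{k},y))\not\cong S_{\binom{k}{2}+1}$.

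I expect the main obstacle to be the first step: rigorously justifying that, once all non-clique edges outside $X_{k}\cup\{y\}$ are set to $0$, the single extra edge $y$ can never participate in a $k$-clique, so that $y$ is genuinely a dummy variable. This graph-theoretic fact (the degree bound on $v$) must then be reconciled with the disjunctive-normal-form bookkeeping of Theorems~2 and~5 so that the ``local symmetry'' being measured is well defined. The small cases $k\le 2$, where the $k$-clique problem is degenerate, should be handled or excluded separately.
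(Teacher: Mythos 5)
Your proposal is correct and its decisive step is the same as the paper's: the transposition $(x_{1}\ y)$ fails to be an automorphism, witnessed (explicitly in the paper, implicitly in your ``depends on $y$ and not on $x_{1}$'' remark) by the input $(1,1,\ldots,1,0)$ versus its image $(0,1,\ldots,1,1)$. You additionally prove the sharper facts that $y$ is a dummy variable, that $f_{k}^{restricted}(X_{k},y)=x_{1}\wedge\ldots\wedge x_{\binom{k}{2}}$, and that the automorphism group is exactly $S_{\binom{k}{2}}$, which the paper does not need and does not claim here.
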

\begin{proof}
A method of proof by contradition is used. Suppose if $Aut(f_{k}^{restricted}(X_{k},y))\cong S_{\binom{k}{2}+1}$, then for any input $(X_{k},y)$ 
the return value of $f_{k}^{restricted}(X_{k},y)$ does not change after the permutation on variables. However for two inputs\\
$(X_{k},y) = (x_{1},x_{2},\ldots,x_{\binom{k}{2}},y)=(1,1,\ldots,1,1,0)$ and\\
$(y,x_{2},\ldots,x_{\binom{k}{2}},x_{1})=(0,1,\ldots,1,1,1)(x_{1}\ and\ y\ are\ exchanged)$,\\
the return values of each of these inputs need to be different.
\begin{multline} f_{k}^{restricted}(x_{1},x_{2},\ldots,x_{\binom{k}{2}},y)=f_{k}^{restricted}(1,1,\ldots,1,0)=1\\
\neq f_{k}^{restricted}(y,x_{2},\ldots,x_{\binom{k}{2}},x_{1})=f_{k}^{restricted}(0,1,\ldots,1,1)=0\end{multline}
Therefore $Aut(f_{k}^{restricted}(X_{k},y))\not\cong S_{\binom{k}{2}+1}$.
\end{proof}
Based on these results, the local structure of Boolean function $f_{k}$ can be specified. So next, it is necessary to specify 
its global structure based on its local structure. For $f_{k}(x_{1},x_{2},\ldots,x_{\binom{n}{2}})=C_{1}\vee C_{2}\vee\ldots\vee C_{m}$ 
$(C_{j}(1\le j\le m)$ is not the same as $C_{j}$ used in the discussion above), the following theorem follows.
\begin{theorem}
(A) $C_{j}(1\le j\le m)$ has to contain at least $\binom{k}{2}$ variables.\\
(B) $\binom{k}{2}$ variables in $X_{k}^{i}(1\le i\le\binom{n}{k})$ have to be contained in one of $C_{j}(1\le j\le m)$.\\
(C) After reordering clauses, we can take 
$C_{i}(X_{k}^{i})=f_{k}^{restricted}(X_{k}^{i})=(x_{1}^{i}\wedge x_{2}^{i}\wedge\ldots\wedge x_{\binom{k}{2}}^{i})(1\le i\le\binom{n}{k}).$
\end{theorem}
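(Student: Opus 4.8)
The plan is to prove the three assertions in the order \textbf{(A)}, \textbf{(B)}, \textbf{(C)}, in each case arguing by contradiction exactly as in the proofs of Theorem~2 and Theorem~5, but now applied to the \emph{unrestricted} function $f_{k}$ rather than to a restriction of it. Throughout I will exploit the vertex‑relabelling symmetry of $f_{k}$ recorded earlier: for any candidate $k$‑clique $X_{k}^{i}$ there is a permutation $\sigma\in S_{n}$ carrying the vertex set underlying $X_{k}$ onto the one underlying $X_{k}^{i}$, so Theorem~5 transfers and gives $f_{k}^{restricted}(X_{k}^{i})=(x_{1}^{i}\wedge x_{2}^{i}\wedge\ldots\wedge x_{\binom{k}{2}}^{i})$ for every $i$.

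For \textbf{(A)} I would first delete every unsatisfiable clause, which is legitimate because such a clause is identically $0$ and does not affect the disjunction — this is precisely case (2) of the proof of Theorem~2 — so we may assume each $C_{j}$ is satisfiable. If some $C_{j}$ mentioned fewer than $\binom{k}{2}$ variables, pick the assignment of its variables that makes $C_{j}=1$ (positive literals to $1$, negative literals to $0$) and set every remaining edge variable to $0$; then $f_{k}=1$ on this input, yet the encoded graph has at most $\binom{k}{2}-1$ edges present and therefore cannot contain a $k$‑clique, forcing $f_{k}=0$ — a contradiction. Hence every clause contains at least $\binom{k}{2}$ variables.

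For \textbf{(B)} I would fix a candidate $k$‑clique $X_{k}^{i}$ and feed $f_{k}$ the graph whose present edges are exactly those of $X_{k}^{i}$ (all variables of $X_{k}^{i}$ set to $1$, all other edge variables to $0$). This graph contains a $k$‑clique, so some clause $C_{j}$ evaluates to $1$; consequently $C_{j}$ has no positive literal outside $X_{k}^{i}$ (such a literal would be $0$) and no negated literal inside $X_{k}^{i}$ (it would be $0$ as well). If $C_{j}$ failed to contain some $x_{\ell}^{i}\in X_{k}^{i}$, then flipping $x_{\ell}^{i}$ to $0$ would leave $C_{j}$ satisfied while destroying the $k$‑clique, so $f_{k}=0$ — contradiction. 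Thus $C_{j}$ contains all of $X_{k}^{i}$, which is \textbf{(B)}; moreover the same argument pins down the set of positive literals of $C_{j}$ to be \emph{exactly} $X_{k}^{i}$, a sharpening I will use next.

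For \textbf{(C)} I would, for each $i$, single out one clause $C^{(i)}$ that is satisfied by the ``$X_{k}^{i}$‑only'' input above. By the sharpened form of \textbf{(B)} its positive literals are precisely $X_{k}^{i}$ and its negated literals all lie outside $X_{k}^{i}$, so restricting $C^{(i)}$ to $X_{k}^{i}$ (other edge variables set to $0$) destroys none of its positive literals and turns each negated literal into $1$, leaving $C^{(i)}(X_{k}^{i})=(x_{1}^{i}\wedge x_{2}^{i}\wedge\ldots\wedge x_{\binom{k}{2}}^{i})$, which by Theorem~5 (with relabelling) equals $f_{k}^{restricted}(X_{k}^{i})$. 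Finally, two distinct candidate cliques $X_{k}^{i}\neq X_{k}^{i'}$ come from distinct $k$‑subsets of the $n$ vertices, which share at most $\binom{k-1}{2}<\binom{k}{2}$ edges, so $X_{k}^{i}\neq X_{k}^{i'}$ as edge sets and hence $C^{(i)}\neq C^{(i')}$ because their positive‑literal sets differ; the $\binom{n}{k}$ clauses $C^{(1)},\ldots,C^{(\binom{n}{k})}$ are therefore pairwise distinct and may be moved to the front of the disjunction, giving the asserted reordering. The step I expect to be the main obstacle is exactly this last one — excluding the possibility that one cleverly constructed clause (with well‑placed negated literals on the ``outside'' edges) secretly serves several candidate cliques at once — and the resolution is the observation, forced by the restriction argument, that the positive‑literal set of the chosen clause must coincide with $X_{k}^{i}$ on the nose.
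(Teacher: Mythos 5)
Your part (A) reproduces the paper's argument for (A) almost verbatim (discard unsatisfiable clauses as in case (2) of Theorem~2, then satisfy a short clause and zero out the rest to get a clique-free graph on which $f_k$ outputs $1$), so there is nothing to compare there. Parts (B) and (C), however, take a genuinely different and more self-contained route. The paper proves (B) indirectly, by saying that if no clause contained all of $X_k^i$ then $f_k^{restricted}(X_k^i)$ could not contain those variables, contradicting Theorem~5; you instead feed in the graph whose edges are exactly $X_k^i$, pick a satisfied clause, and read off that its positive literals are exactly $X_k^i$ and its negated literals lie outside $X_k^i$ (flipping a missing $x_\ell^i$ to $0$ gives the contradiction). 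This is cleaner and buys you a sharpening that the paper never states. For (C) the paper argues by cases on clauses with more than $\binom{k}{2}$ variables, i.e.\ the forms (19) and (20), and claims to exclude both, concluding that some clause is \emph{literally} the bare conjunction $x_1^i\wedge\cdots\wedge x_{\binom{k}{2}}^i$; you instead derive (C) directly from your sharpened (B) by restricting the selected clause to $X_k^i$.

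The one substantive divergence you should be aware of: your (C) establishes only that $C^{(i)}$ \emph{restricted} to $X_k^i$ equals the bare conjunction, while explicitly permitting $C^{(i)}$ to carry extra negated literals on edges outside $X_k^i$, e.g.\ $x_1^i\wedge\cdots\wedge x_{\binom{k}{2}}^i\wedge\neg y$. The paper's proof of (C) asserts the stronger conclusion $C_j=(x_1^i\wedge\cdots\wedge x_{\binom{k}{2}}^i)$ with no extra literals at all, and that stronger form is what equation (18) and the absorption-law step in Theorem~10 silently rely on. Under the reading of the statement suggested by the notation $C_i(X_k^i)=f_k^{restricted}(X_k^i)$, your proof is complete; under the literal reading used downstream, it is not --- though it is worth noting that the literal reading is not actually forced by any argument (a DNF such as $(X_k^i\wedge\neg y)\vee(X_k^i\wedge y)\vee\cdots$ is logically equivalent to $f_k$ yet contains no bare conjunction for clique $i$, and the paper's case~(20) reasoning, which demands that the individual clause $C_j$ output $1$ on the all-ones input, does not exclude it). So your weaker conclusion is the defensible one; just flag that it does not by itself license the decomposition (18) as written.
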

\begin{proof}(A) Like the discussion in theorem2, a method of proof by contradiction is used. 
If $C_{j}(1\le j\le m)$ contains only $l(<\binom{k}{2})$ variables, then two cases are conceivable.\\
(1) $C_{j}$ is satisfiable if the truth values of $l$ variables are appropriately chosen.\\
(2) $C_{j}$ is not satisfiable for any of the truth values.

In case (1), $C_{j}=1$ for $l$ variables with appropriately chosen truth values. So
\begin{equation}\label{alwaystrue2} f_{k}(x_{1},x_{2},\ldots,x_{\binom{n}{2}})=C_{1}\vee C_{2}\vee\ldots\vee C_{j}\vee\ldots\vee C_{m}=1 \end{equation}
But if the variable not used in $C_{j}$ and the remaining variables take 0, $f_{k}(x_{1},x_{2},\ldots,x_{\binom{n}{2}})$ should return $0$ 
as $X_{k}$ does not form a $k$-clique. This contradicts with (\ref{alwaystrue2}).

In case (2), as $C_{j}=0$\\
$f_{k}(x_{1},x_{2},\ldots,x_{\binom{n}{2}})=C_{1}\vee C_{2}\vee\ldots\vee C_{j-1}\vee C_{j}\vee C_{j+1}\vee\ldots\vee C_{m}$\\
$=C_{1}\vee C_{2}\vee\ldots\vee C_{j-1}\vee 0\vee C_{j+1}\vee\ldots\vee C_{m}$\\
$=C_{1}\vee C_{2}\vee\ldots\vee C_{j-1}\vee C_{j+1}\vee\ldots\vee C_{m}$\\ 
$C_{j}$ does not influence the return value and should be erased. Therefore each one of $C_{j}$ has to contain at least $\binom{k}{2}$ variables.

(B) If $\binom{k}{2}$ variables in $X_{k}^{i}(1\le i\le \binom{n}{k})$ are not contained in any of $C_{j}$, 
then $f_{k}^{restricted}(X_{k}^{i})$ does not contain $\binom{k}{2}$ variables in $X_{k}^{i}$, which contradicts with theorem 5. 
Therefore $\binom{k}{2}$ variables in $X_{k}^{i}(1\le i\le \binom{n}{k})$ have to be contained in one of $C_{j}(1\le j \le m)$.

(C) For clauses which contain $\binom{k}{2}$ variables in $X_{k}^{i}$, suppose if all of them have more than $\binom{k}{2}$ variables.
Then a clause $C_{j}$ which satisfies the above condition takes the forms
\begin{equation}\label{smallest1} C_{j}=(x_{1}^{i}\wedge x_{2}^{i}\wedge\ldots\wedge x_{\binom{k}{2}}^{i}\wedge y_{1}\wedge Y) \end{equation}
(Y is a clause in which variables are connected by $\wedge$).
\begin{equation}\label{smallest2} or\ C_{j}=(x_{1}^{i}\wedge x_{2}^{i}\wedge\ldots\wedge x_{\binom{k}{2}}^{i} \bigwedge_{l=1}(\neg y_{l})) \end{equation}
In (\ref{smallest1}), assigning false values to variables other than \\
$x_{1}^{i},x_{2}^{i},\ldots,x_{\binom{k}{2}}^{i}$ does not produce a clause \\
$f_{k}^{restricted}(X_{k}^{i})=(x_{1}^{i}\wedge x_{2}^{i}\wedge\ldots\wedge x_{\binom{k}{2}}^{i})$ \\
In (\ref{smallest2}), assigning false values to variables other than \\
$x_{1}^{i},x_{2}^{i},\ldots,x_{\binom{k}{2}}^{i}$ produces a clause \\
$f_{k}^{restricted}(X_{k}^{i})=(x_{1}^{i}\wedge x_{2}^{i}\wedge\ldots\wedge x_{\binom{k}{2}}^{i})$ \\
but assigning $(x_{1}^{i}=x_{2}^{i}=\ldots=x_{\binom{k}{2}}^{i}=1,y_{l}=1(1\le l))$ to $C_{j}$ returns $0$, though $C_{j}$ and $f_{k}$ should return $1$.\\
Therefore neither (\ref{smallest1}) nor (\ref{smallest2}) follow. 
So $C_{j}$ has exactly $\binom{k}{2}$ variables in $X_{k}^{i}$, and $C_{j}=(x_{1}^{i}\wedge x_{2}^{i}\wedge\ldots\wedge x_{\binom{k}{2}}^{i})$.
\end{proof}
By theorem7, $f_{k}$ can be expressed in the following form.
\begin{equation} f_{k}(x_{1},x_{2},\ldots,x_{\binom{n}{2}})=(\bigvee_{i=1}^{\binom{n}{k}}f_{k}^{restricted}(X_{k}^{i})) \bigvee_{j=1} C_{j}' \end{equation}
It is necessary to specify the form of $C_{j}'$. 
\begin{theorem} $C_{j}'$ has to contain $\binom{k}{2}$ variables representing edges among $k$ vertices:
$C_{j}'=C_{j}'(X_{k}^{i},\ldots)$ for some $i$ \end{theorem}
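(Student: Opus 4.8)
The plan is to prove Theorem~8 by contradiction, reusing the same dichotomy that drove Theorem~2 and Theorem~7(A): assume some remaining clause $C_j'$ does \emph{not} contain all $\binom{k}{2}$ variables that represent the edges among any fixed set of $k$ vertices, and then either show $C_j'$ is redundant (so that it need not appear at all) or exhibit an input on which $f_k$ returns the wrong value. So the first step is to split on whether $C_j'$ is satisfiable.

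First I would dispose of the easy branch: if $C_j'$ is unsatisfiable, then $C_j'\equiv 0$, so it can be struck from the disjunction $f_k=(\bigvee_{i}f_k^{restricted}(X_k^i))\bigvee_j C_j'$ without changing $f_k$, exactly as in case~(2) of Theorems~2 and~7; hence no such clause need occur, a contradiction. In the remaining branch $C_j'$ is satisfiable, so in particular no variable occurs in it both positively and negatively. I would then consider the ``minimal'' satisfying assignment $\alpha$: set every variable occurring positively in $C_j'$ to $1$, every variable occurring negatively in $C_j'$ to $0$, and \emph{every variable not occurring in $C_j'$ also to $0$}. This $\alpha$ is well defined and satisfies $C_j'$, hence $f_k(\alpha)=1$, so the graph $G$ whose edge set is exactly the set of positive literals of $C_j'$ must contain a $k$-clique, say on a vertex set $S$ with $|S|=k$. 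But then each of the $\binom{k}{2}$ edges inside $S$ is an edge of $G$, i.e.\ occurs positively in $C_j'$; writing $X_k^i$ for these $\binom{k}{2}$ edge variables, this gives $C_j'=C_j'(X_k^i,\ldots)$ for that $i$, contradicting the assumption and proving the theorem. (The degenerate case of an empty clause $C_j'\equiv 1$ is subsumed here: $\alpha$ is then the all-zero assignment, which makes $f_k=1$ while the empty graph has no $k$-clique.)

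The step I expect to be the crux is the decision to extend $\alpha$ by zeros on all variables outside $C_j'$: it is precisely this that forces the $k$-clique of $G$ to be ``paid for'' entirely by the positive literals of $C_j'$ rather than by edges supplied from elsewhere in the graph, and hence that pins the conclusion onto $C_j'$ itself. I would also check the two bookkeeping points that keep the case split clean: that the negative literals of $C_j'$ merely forbid certain edges and therefore never conflict with constructing $G$ as above, and that any $C_j'$ which (after the restriction already performed in Theorem~7) collapses to a constant is handled by the unsatisfiability/redundancy branch or by the degenerate empty-clause remark rather than being a genuine counterexample. Modulo these observations the argument is a direct adaptation of the earlier ``assign false values to the remaining variables'' technique.
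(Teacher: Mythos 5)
Your proof follows essentially the same route as the paper's: discard the unsatisfiable $C_j'$ as erasable, then use a satisfying assignment of $C_j'$ to contradict the fact that $f_k$ detects $k$-cliques. Your explicit construction of the minimal satisfying assignment (zeros on all variables not occurring in $C_j'$) is exactly the detail the paper leaves implicit, so your version is if anything more complete than the original.
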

\begin{proof}
If $C_{j}'$ returns $0$ for all of the inputs, $C_{j}'$ should be erased in $f_{k}$. 
So it is necessary to consider the case where $C_{j}'$ returns $1$ for some input. 
If no $\binom{k}{2}$ variables in $C_{j}'$ represent edges among $k$ vertices and $C_{j}'$ returns $1$ for some input, 
that contradicts with the fact \\
$\displaystyle f_{k}(x_{1},x_{2},\ldots,x_{\binom{n}{2}})=(\bigvee_{i=1}^{\binom{n}{k}}f_{k}^{restricted}(X_{k}^{i}))\bigvee_{j} C_{j}'$ \\
detects $k$ cliques. Therefore $C_{j}'$ has to contain $\binom{k}{2}$ variables representing edges among $k$ vertices.
\end{proof}
\begin{theorem} $C_{j}'$ has to contain $\binom{k}{2}$ variables representing edges among $k$ vertices ``without NOT gates'':
$C_{j}'=(x_{1}^{i}\wedge x_{2}^{i}\wedge\ldots\wedge x_{\binom{k}{2}}^{i}\wedge \ldots)$ \end{theorem}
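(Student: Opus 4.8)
The plan is to pin down the polarity of the relevant edge-variables by exhibiting a single input on which $f_k$ is forced to evaluate to $1$, in the spirit of the proofs of Theorems~7 and~8. By Theorem~8 we may write $C_j'=C_j'(X_k^i,\ldots)$ for some $k$-set, so each of the $\binom{k}{2}$ variables $x_1^i,\ldots,x_{\binom{k}{2}}^i$ occurs in $C_j'$ with some polarity; and, exactly as in the proofs of Theorems~2 and~7, an unsatisfiable clause may be struck from the disjunction, so we may assume $C_j'$ is satisfiable. In particular no variable occurs in $C_j'$ both positively and negatively.

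First I would form the canonical satisfying assignment $\beta$ of $C_j'$: set every variable that occurs positively in $C_j'$ to $1$, every variable that occurs negatively in $C_j'$ to $0$, and every variable absent from $C_j'$ to $0$. Then $C_j'(\beta)=1$, and since $C_j'$ is one of the disjuncts of $f_k$ we obtain $f_k(\beta)=1$. Reading $\beta$ as a graph $G_\beta$ on $n$ vertices, the edges present in $G_\beta$ are exactly the variables occurring positively in $C_j'$. Since $f_k(\beta)=1$, the graph $G_\beta$ must contain a $k$-clique, i.e.\ there is a set of $k$ vertices all of whose $\binom{k}{2}$ edge-variables occur positively --- ``without a NOT gate'' --- in $C_j'$. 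Writing that set as $X_k^i$ yields $C_j'=(x_1^i\wedge x_2^i\wedge\cdots\wedge x_{\binom{k}{2}}^i\wedge\cdots)$, the trailing ``$\cdots$'' absorbing whatever remaining literals $C_j'$ carries, which is the claim.

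The step I expect to be the main obstacle --- and the one where a naive argument fails --- is the identification of which $k$-set is the unnegated one. It is tempting to try to show that the very $X_k^i$ supplied by Theorem~8 must itself appear unnegated, and to do so by flipping $x_1^i$ from $0$ to $1$ in $\beta$; but a satisfiable clause can legitimately carry a negated edge of one $k$-set alongside all the edges of a \emph{different} $k$-set written positively (such a clause still implies $f_k=1$), and then no local polarity flip produces a contradiction. So the argument must be arranged around the correct existential statement --- \emph{some} $k$-set has all its edge-variables unnegated in $C_j'$ --- and the contradiction (if one prefers to phrase it that way) must be extracted globally: were no $k$-set to have all its edge-variables positive in $C_j'$, then $G_\beta$ would be $k$-clique-free, forcing $f_k(\beta)=0$ against $C_j'(\beta)=1$.
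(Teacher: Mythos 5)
Your proof is correct and is essentially the paper's own argument run in the forward rather than the contradiction direction: both hinge on the canonical satisfying assignment of $C_j'$ (positive literals to $1$, everything else to $0$), which satisfies $C_j'$, hence forces $f_k=1$, hence forces some $k$-set to have all $\binom{k}{2}$ of its edge-variables occurring positively in $C_j'$. Your explicit treatment of satisfiability of $C_j'$ and of the point that the unnegated $k$-set need not coincide with the particular $X_k^i$ supplied by Theorem~8 is somewhat more careful than the paper's version, but the underlying idea is the same.
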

\begin{proof}
By theorem8, $C_{j}'=C_{j}'(X_{k}^{i},\ldots)$. As $C_{j}'$ might contain variables representing more than one clique, 
variables in $C_{j}'$ can be written as
\begin{equation} X_{k}^{i_{1}}\cup X_{k}^{i_{2}}\cup\ldots\cup X_{k}^{i_{m1}}\cup y_{1}\cup y_{2}\cup\ldots\cup y_{m2}\ (m1\ge 1) \end{equation}
$y_{1},y_{2},\ldots,y_{m2}$ do not contain variables representing a $k$-clique. Suppose if at least one variable in each $X_{k}^{i}(i=i1,i2,\ldots,i_{m1})$ 
is connected to a NOT gate, then $C_{j}'$ can be written as
\begin{multline}\displaystyle C_{j}'=C_{j}'(X_{k}^{i_{1}}\cup X_{k}^{i_{2}}\cup\ldots\cup X_{k}^{i_{m1}}\cup y_{1}\cup y_{2}\cup\ldots\cup y_{m2}) \\
=(\bigwedge(\neg x_{z1}) \bigwedge x_{z2}\bigwedge (\neg y_{w1})\bigwedge y_{w2})\\
(x_{z1},x_{z2}\in X_{k}^{i}, i=i_{1},i_{2},\ldots,i_{m1}, 1\le w1,w2\le m2) \end{multline}
As each $X_{k}^{i}(i=i1,i2,\ldots,i_{m1})$ has to contain at least one $x_{z1}$, an input $x_{z1}=0$, $x_{z2}=1$, $y_{w1}=0$, $y_{w2}=1$ 
does not contain a $k$-clique though 
\begin{equation} C_{j}'(x_{z1}=0,x_{z2}=1,y_{w1}=0,y_{w2}=1)=1 \end{equation} 
Therefore $C_{j}'$ has to contain $\binom{k}{2}$ variables in $X_{k}^{i}$ without NOT gates. 
So $C_{j}'=(x_{1}^{i}\wedge x_{2}^{i}\wedge\ldots\wedge x_{\binom{k}{2}}^{i}\wedge \ldots)$.
\end{proof}
By theorem9, $f_{k}$ can be specified as follows.
\begin{equation}\label{fk1} f_{k}(x_{1},x_{2},\ldots,x_{\binom{n}{2}})=(\bigvee_{i=1}^{\binom{n}{k}}f_{k}^{restricted}(X_{k}^{i}))\bigvee_{j} C_{j}' \end{equation}
\begin{equation} C_{j}'=(x_{1}^{i}\wedge x_{2}^{i}\wedge\ldots\wedge x_{\binom{k}{2}}^{i}\wedge Y)\ (Y\ is\ a\ clause)\end{equation}
As indicated above, conversion into the disjunctive normal form sometime results in an explosion in the formula.
So it is necessary to determine the minimum size of $f_{k}$ among its logical equavalences.
\begin{theorem}For $f_{k}(x_{1},x_{2},\ldots,x_{\binom{n}{2}})$ for the $k$-clique problem, the circuit size of $f_{k}$ is larger than $(\frac{n}{k})^{k}-1$.\end{theorem}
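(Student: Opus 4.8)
The plan is to read the bound straight off the normal form~(\ref{fk1}). By Theorem~7(C) every circuit computing $f_k$ is logically equivalent to a disjunction that contains, among its disjuncts, the $\binom{n}{k}$ conjunctions $f_k^{restricted}(X_k^i)=(x_1^i\wedge x_2^i\wedge\cdots\wedge x_{\binom{k}{2}}^i)$, one for each candidate $k$-clique $X_k^i$; the remaining clauses $C_j'$, by Theorems~8 and~9, each already contain a full positive clique conjunction and hence can only increase the size. So it suffices to lower-bound the size of $\bigvee_{i=1}^{\binom{n}{k}}f_k^{restricted}(X_k^i)$ and then compare $\binom{n}{k}$ with $(n/k)^k$.

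First I would note that these $\binom{n}{k}$ conjunctions are pairwise distinct Boolean functions, since distinct $k$-subsets of the $n$ vertices induce distinct edge sets and hence distinct monotone AND-clauses. Next I would count gates: combining $N$ pairwise distinct disjuncts into a single OR needs a binary tree of at least $N-1$ OR gates, because each binary OR gate drops by one the number of not-yet-combined subfunctions and no two of the $N$ disjuncts coincide. With $N=\binom{n}{k}$ this forces at least $\binom{n}{k}-1$ OR gates, so the circuit size of $f_k$ is at least $\binom{n}{k}-1$. Finally I would apply the elementary estimate $\binom{n}{k}=\prod_{i=0}^{k-1}\frac{n-i}{k-i}>\left(\frac{n}{k}\right)^{k}$, valid for $k\ge2$ because every factor is at least $n/k$ and strictly larger for $i\ge1$, to conclude that the circuit size of $f_k$ is larger than $\left(\frac{n}{k}\right)^{k}-1$.

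The hard part will be making the gate count immune to the rewriting of $f_k$ into a logically equivalent circuit: unbounded fan-out permits aggressive sharing of subcircuits, so one must argue that the $\binom{n}{k}$ disjuncts really do cost $\binom{n}{k}-1$ gates in \emph{any} equivalent circuit, not merely in the fixed DNF~(\ref{fk1}). Here I would lean on the fact that~(\ref{fk1}) is forced: Theorems~1--9 show that for every $i$ the restriction of $f_k$ obtained by setting all edge variables outside $X_k^i$ to $0$ equals exactly $x_1^i\wedge\cdots\wedge x_{\binom{k}{2}}^i$, so every equivalent circuit, once restricted in each of these $\binom{n}{k}$ ways, must expose the corresponding conjunction. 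Welding this family of $\binom{n}{k}$ restriction-arguments into a single counting argument on the unrestricted circuit is the real content of the proof, and I expect the delicate point to be controlling how many of the $\binom{n}{k}$ restrictions a single OR gate of the circuit can serve at once.
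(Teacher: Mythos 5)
Your proposal reproduces the paper's own argument almost exactly --- forced DNF shape, count the OR gates needed to join the $\binom{n}{k}$ clique conjunctions, then $\binom{n}{k}>(n/k)^{k}$ --- and it also reproduces, and to your credit explicitly names, the same gap. Your counting step (``each binary OR gate drops by one the number of not-yet-combined subfunctions'') is valid only for a formula whose leaves literally are the $\binom{n}{k}$ disjuncts; an arbitrary circuit computing $f_{k}$ need not be an OR of those conjunctions at all, since logical equivalence imposes no such syntactic shape, and unbounded fan-out plus NOT gates allow entirely different computations. The restriction argument you sketch at the end only pins down the $\binom{n}{k}$ restricted \emph{subfunctions}; it says nothing about how many gates the unrestricted circuit needs, and, exactly as you suspect, a single gate can serve very many of those restrictions at once. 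The paper disposes of this point in one unsupported sentence (``the number of OR gates in (\ref{fk2}) cannot be reduced by converting it into its logical equivalences''), so the step you defer as ``the real content of the proof'' is absent from both your proposal and the paper. What you have written is therefore a lower bound on one particular DNF representation, not on circuit size.

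Moreover, this gap cannot be closed, because the statement is false in the very range the paper applies it ($3<k<n^{1/4}$). For $k=4$ the claimed bound is $(n/4)^{4}-1=\Theta(n^{4})$, but $K_{4}$-detection reduces to fast matrix multiplication (N\v{e}set\v{r}il--Poljak) and admits circuits of size $O(n^{\omega+1})=O(n^{3.38})$, which is $o(n^{4})$; in such a circuit a single gate genuinely contributes to polynomially many of the $\binom{n}{4}$ candidate cliques simultaneously. So the ``delicate point'' you flag --- controlling how many restrictions one OR gate can serve --- is not merely delicate; it is where any proof along these lines must fail.
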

\begin{proof}
Using the absorption law,
\begin{equation} A\vee(A\wedge B)=A\wedge(1\vee B)=A \end{equation}
(\ref{fk1}) can be expressed in a smaller circuit by compressing
\begin{multline} f_{k}^{restricted}(X_{k}^{i})\vee C_{j}'\\
=f_{k}^{restricted}(X_{k}^{i})\vee (f_{k}^{restricted}(X_{k}^{i})\wedge Y)=f_{k}^{restricted}(X_{k}^{i}) \end{multline}
\begin{equation}\label{fk2} f_{k}(x_{1},x_{2},\ldots,x_{\binom{n}{2}})=\bigvee_{i=1}^{\binom{n}{k}}(f_{k}^{restricted}(X_{k}^{i})) \end{equation}
(Of course, the compressed form (\ref{fk2}) satisfies theorem1 and theorem6.)\\
For $i,j(i\neq j,1\le i,j\le \binom{k}{2})$, edge sets $X_{k}^{i}$ and $X_{k}^{j}$ have at most $l(<\binom{k}{2})$ elements in common.
The only way to express (\ref{fk2}) in a smaller circuit is to apply the distributive law
\begin{equation}\label{distributive} (A\wedge B)\vee(A\wedge C)=A\wedge(B\vee C) \end{equation}
As the number of OR gates in (\ref{fk2}) cannot be reduced by converting it into its logical equivalences using (\ref{distributive}), 
the size of the Boolean circuit is larger than the number of OR gates expressed as a Boolean circuit of (\ref{fk2}), not as a disjunctive normal form. 
Therefore the size of $f_{k}$ as a Boolean circuit is larger than $\binom{n}{k}-1>(\frac{n}{k})^{k}-1$.
\end{proof}
By theorem10, for $k$ in $3<k<{n}^{\frac{1}{4}}$, this proves $P/poly \neq NP$ and $P \neq NP$.

\subsection{Proof that this strategy is non-Naturalizable}
In the paper \cite{Natural}, the proof is natural or natularizable if it satisfies the following three conditions, constructivity, largeness, and usefulness.\\
{\em''
Formally, by a combinatorial property of Boolean functions we will mean a set of Boolean functions $\{C_{n}\subseteq F_{n}| n\in\omega \}$. 
Thus, a Boolean function $f_{n}$ will possess property $C_{n}$ if and only if $f_{n}\in C_{n}$. 
(Alternatively, we will sometimes find it convenient to use function notation: $C_{n}(f_{n})=1$ if $f_{n}\in C_{n}$; $C_{n}(f_{n})=0$ if $f_{n}\not\in C_{n}$.) 
The combinatorial property $C_{n}$ is natural if it contains a subset $C_{n}^{*}$ with the following two conditions:\\

Constructivity.   The predicate $f_{n}\overset{?}{\in}C_{n}^{*}$ is computable in P. Thus $C_{n}^{*}$ is computable in time which is polynomial 
in the truth table of $f_{n}$;\\

Largeness.     $|C_{n}^{*}| \ge 2^{-O(n)}|F_{n}|$\\
A combinatorial property $C_{n}$ is useful against $P/poly$ if it satisfies:\\

Usefulness.    The circuit size of any sequence of functions $f_{1},f_{2},\ldots,f_{n},\ldots$, where $f_{n}\in C_{n}$, is super-polynomial; i.e., 
for any constant $k$, for sufficiently large $n$, the circuit size of $f_{n}$ is greater than $n^{k}$.
``}\\
The proof strategy used in this paper is to specify the Boolean function $f_{k}$ as (\ref{fk2}). Of course, a Boolean function
\begin{equation} (\ref{fk2})\vee(x_{1}^{1}\wedge\neg x_{1}^{1})=\bigvee_{i=1}^{\binom{n}{k}}f_{k}^{restricted}(X_{k}^{i})\vee(x_{1}^{1}\wedge\neg x_{1}^{1})\end{equation}
can also recognize $k$-cliques correctly, but the essential part of the function is (\ref{fk2}) bacause not a clause in (\ref{fk2}) cannot be erased.
As the aim of this paper is to determine the circuit lower bound, not only logical equivalence but also impossibility to erase a clause should be 
the combinatorial property(If erasable clauses are added, the size gets larger than its strict lower bound). 
So it is reasonable to define the combinatorial property $C_{n}$ as \\
``$f\in C_{n}$ if and only if $f$ is logically equivalent to $f_{k}$ (\ref{fk2}) and not a clause in $f$ cannot be erased 
when expressed in the disjunctive normal form.''\\
Boolean functions satisfying this property can be given by adding double NOT gates such as\\
$\displaystyle f_{k}(x_{1},x_{2},\ldots,x_{\binom{n}{2}})=\bigvee_{i=1}^{\binom{n}{k}}f_{k}^{restricted}(X_{k}^{i})$\\
$\displaystyle =(x_{1}^{1}\wedge x_{2}^{1}\wedge\ldots\wedge x_{\binom{k}{2}}^{1})\bigvee_{i=2}^{\binom{n}{k}}f_{k}^{restricted}(X_{k}^{i})$\\
$\displaystyle =\neg((\neg x_{1}^{1})\vee (\neg x_{2}^{1})\vee\ldots\vee (\neg x_{\binom{k}{2}}^{1}))\bigvee_{i=2}^{\binom{n}{k}}f_{k}^{restricted}(X_{k}^{i})$\\
Therefore the total number of Boolean function satisfying this property is at most $2^{\binom{n}{k}}$.
As the total number of Boolean functions with $n$ variables is $2^{2^{n}}$, 
\begin{equation} \frac{|C_{n}^{*}|}{|F_{n}|}=\frac{2^{\binom{n}{k}}}{2^{2^{n}}} << 2^{-O(n)} \end{equation}
This violates the largeness condition in Natural proof. Therefore this strategy does not conflict with the widely believed conjecture 
on the existence of pseudorandom functions.

\section{Relationship to other open problems in Computational Complexity Theory}
As $P\neq NP$ and $NP\subseteq PH \subseteq PSPACE$, $P\neq PH$ and $P\neq PSPACE$.
Among problems in $NP$, complexity classes of the integer factorization problem, the discrete logarithm problem and the graph isomorphism problem
\cite{luks}\cite{ladner:intermediate}\cite{gipdisease}\cite{uwegiplow} remain open for many years. 
It is already known that \\
(1)if the decision version of the integer factorization problem is in NP-complete, 
then NP=co-NP and the polynomial hierarchy will collapse to its first level.\\
(2)if the graph isomorphism problem is in NP-complete, then the polynomial hierarchy will collapse to its second level.\\
As a collapse of polynomial hierarchy seems unlikely to happen under $P \neq NP$, they seem unlikely to be in NP-complete.
Furthermore the circuit lower bounds of the integer factorization problem, the discrete logarithm problem and the graph isomorphism problem 
cannot be obtained by the proof strategy used in this paper, because \\
(1)the integer factorization problem and the discrete logarithm problem have neither global symmetry nor local symmetry to specify their structure.\\
(2)the global symmetry and local symmetry of graph isomorphism problem are hard to determine in general.\\

Whether or not NP-complete problems can be solved by quantum computers in polynomial time remains open.

\section{Conclusions}
By interpreting a Boolean circuit as a graph, the global symmetry and the local symmetry of variables in the circuit is discussed in this paper. 
The small number of global symmetry and the large number of local symmetry in the circuit which computes $f_{k}$ can establish 
the exponential circuit lower bound for a NP-complete problem, which means $P/poly\neq NP$ and $P\neq NP$. 

Even if the same strategy is used, the computational complexity classes of the integer factorization problem, the discrete logarithm problem and 
the graph isomorphism problem remain open. Furthermore whether or not NP-complete problems can be solved by quantum computers in polynomial time remain open.

As NP-complete problems turn out to be impossible to solve in polynomial time by a classical computer, heuristic approaches or 
algorithms for restricted types of inputs need to be developed for NP-complete problems.

\bibliographystyle{abbrv}
\bibliography{circuitAndSymmetry.bib}

\end{document}